\documentclass[a4paper,USenglish]{lipics-v2021}

\usepackage{amsmath}
\usepackage{amsfonts}
\usepackage{amssymb}
\usepackage{amsthm}
\usepackage{mleftright}
\usepackage{xspace}
\usepackage{xcolor}

\nolinenumbers

\newcommand{\localKTOne}{$\text{KT}_1$ LOCAL}

\newcommand{\KTZero}{$\text{KT}_0$}
\newcommand{\KTOne}{$\text{KT}_1$}
\newcommand{\cC}{ \mathcal{C}}

\def\ShowComment{True}
\ifdefined\ShowComment

\def\gopal#1{{\color{red}\underline{\textsf{Gopal:}}} {\color{blue} \emph{#1}}}
\def\fabien#1{{\color{orange}\underline{\textsf{Fabien:}}} {\color{blue} \emph{#1}}}

\else
\def\john#1{}
\def\gopal#1{}
\def\fabien#1{}
\fi

\author{Fabien Dufoulon}{Lancaster University, UK}{f.dufoulon@lancaster.ac.uk}{https://orcid.org/0000-0003-2977-4109}{Part of this work was done while visiting the University of Padova, partially supported by the ``National Group for Scientific Computing'' (GNCS-INdAM).}
\author{Gopal Pandurangan}{University of Houston, USA}{gopal@cs.uh.edu}{https://orcid.org/0000-0001-5833-6592}{Supported in part by ARO Grant W911NF-231-0191 and NSF grant CCF-2402837.}
\author{Peter Robinson}{Augusta University, USA}{perobinson@augusta.edu}{https://orcid.org/0000-0002-7442-7002}{Supported in part by National Science Foundation (NSF) grant CCF-2402836.}
\author{Michele Scquizzato}{University of Padova, Italy}{scquizza@math.unipd.it}{https://orcid.org/0000-0002-9108-2448}{Supported in part by the Italian National Center for HPC, Big Data, and Quantum Computing.}

\authorrunning{F. Dufoulon, G. Pandurangan, P. Robinson, and M. Scquizzato}

\Copyright{Fabien Dufoulon, Gopal Pandurangan, Peter Robinson, and Michele Scquizzato}

\title{The Singular Optimality of Distributed Computation in LOCAL}

\begin{CCSXML}
<ccs2012>
   <concept>
       <concept_id>10003752.10003809.10010172</concept_id>
       <concept_desc>Theory of computation~Distributed algorithms</concept_desc>
       <concept_significance>500</concept_significance>
       </concept>
    <concept>
       <concept_id>10002950.10003648.10003671</concept_id>
       <concept_desc>Mathematics of computing~Probabilistic algorithms</concept_desc>
       <concept_significance>500</concept_significance>
       </concept>
   <concept>
       <concept_id>10002950.10003624</concept_id>
       <concept_desc>Mathematics of computing~Discrete mathematics</concept_desc>
       <concept_significance>300</concept_significance>
       </concept>
 </ccs2012>
\end{CCSXML}

\ccsdesc[500]{Theory of computation~Distributed algorithms}
\ccsdesc[500]{Mathematics of computing~Probabilistic algorithms}
\ccsdesc[300]{Mathematics of computing~Discrete mathematics}

\keywords{Distributed algorithms, round and message complexity, BFS tree construction, leader election} 

\hideLIPIcs
\EventEditors{Silvia Bonomi, Letterio Galletta, Etienne Rivi\`{e}re, and Valerio Schiavoni}
\EventNoEds{4}
\EventLongTitle{28th International Conference on Principles of Distributed Systems (OPODIS 2024)}
\EventShortTitle{OPODIS 2024}
\EventAcronym{OPODIS}
\EventYear{2024}
\EventDate{December 11--13, 2024}
\EventLocation{Lucca, Italy}
\EventLogo{}
\SeriesVolume{324}
\ArticleNo{7}

\begin{document}
\date{}
\maketitle

\begin{abstract}
It has been shown that one can design distributed algorithms that are (nearly) \emph{singularly optimal}, meaning they \emph{simultaneously} achieve optimal time and message complexity (within polylogarithmic factors), for several fundamental {\em global} problems such as broadcast, leader election, and spanning tree construction, under the \KTZero{} assumption. With this assumption, nodes have initial knowledge only of themselves, not their neighbors. In this case the time and message lower bounds are $\Omega(D)$ and $\Omega(m)$, respectively, where $D$ is the diameter of the network and $m$ is the number of edges, and there exist (even) deterministic algorithms that simultaneously match these bounds.

On the other hand, under the \KTOne{} assumption, whereby each node has initial knowledge of itself and the  {\em identifiers} of its neighbors, the situation is not clear. 
For the \KTOne{} CONGEST model (where messages are of small size), King, Kutten, and Thorup (KKT) showed that one can solve several fundamental global problems (with the notable exception of BFS tree construction) such as broadcast, leader election, and spanning tree construction with $\tilde{O}(n)$ message complexity ($n$ is the network size), which can be significantly smaller than $m$. Randomization is crucial in obtaining this result. While the message complexity of the KKT result is near-optimal, its time complexity is $\tilde{O}(n)$ rounds, which is far from the standard lower bound of $\Omega(D)$.  An important open question is whether one can achieve singular optimality for the above problems in the \KTOne{} CONGEST model, i.e., whether there exists an algorithm running in  $\tilde{O}(D)$ rounds and  $\tilde{O}(n)$ messages. Another important and related question is whether the fundamental BFS tree construction can be solved with $\tilde{O}(n)$ messages (regardless of the number of rounds as long as it is polynomial in $n$) in \KTOne{}. 

In this paper, we show that in the \KTOne{} LOCAL model (where message sizes are not restricted),  singular optimality is achievable. Our main result is that {\em all} global problems, including BFS tree construction, can be solved in $\tilde{O}(D)$ rounds and  $\tilde{O}(n)$ messages, where both bounds are optimal up to polylogarithmic factors. Moreover, we show that this can be achieved {\em deterministically}.
\end{abstract}

\section{Introduction and Overview}
\label{sec:intro}

The efficiency of distributed algorithms is traditionally measured by their time and message complexities. The time complexity is the number of rounds in the algorithm, whereas the message complexity is the total amount of messages sent during its execution. 
Distributed algorithms that separately optimize for either the number of rounds or the total amount of messages have been studied extensively; more recently, researchers have designed several algorithms that are \emph{simultaneously} (near) optimal with respect to both measures: such algorithms are called \emph{singularly optimal}~\cite{PanduranganRS17}, and are the focus of this paper. 

\medskip

\noindent\textbf{Singular Optimality in \bf \KTZero{}.} 
 In the {\bf \KTZero{}} model (i.e., {\bf K}nowledge {\bf T}ill radius {\bf 0}), also called the {\em clean network model}~\cite{peleg}, where nodes have initial local knowledge of only themselves (and not of their neighbors),  it has been established that one can obtain (near) singularly optimal distributed algorithms, i.e., algorithms that have simultaneously optimal time and message complexity up to polylogarithmic factors, for many fundamental {\em global} problems\footnote{These problems require traversing the entire network to compute their solution and hence take $\Omega(D)$ rounds, where $D$ is
 the network diameter.} such as leader election, broadcast, Spanning Tree (ST), Minimum Spanning Tree (MST), minimum cut, and approximate shortest paths (under some conditions) \cite{jacm15,PanduranganRS20,Elkin20,HaeuplerHW18}.  For problems such as leader election, broadcast, and ST, it has been shown that one can design a singularly optimal algorithm in the \KTZero{} model that takes $\tilde{O}(m)$ messages ($m$ is the number of edges of the network, and $\tilde{O}(\cdot)$ suppresses logarithmic factors) and $O(D)$ rounds ($D$ is the network diameter); both are tight (up to a $\text{polylog}(n)$ factor where $n$ is the number of nodes) due to matching lower bounds that hold even for Monte Carlo randomized algorithms~\cite{jacm15}. In addition,  MST also admits a  singularly optimal algorithm
 \cite{PanduranganRS20,Elkin20} with message complexity
 $\tilde{O}(m)$ and round complexity $\tilde{O}(D+\sqrt{n})$  (both bounds are tight up to polylogarithmic factors). The singular optimality of MST in the \KTZero{} model also implies the singular optimality of many other problems, such as approximate minimum cut and graph verification problems \cite{stoc11}.

\medskip

\noindent\textbf{\KTOne{} Model and the KKT Result.}
On the other hand, in the {\bf \KTOne{}} model (i.e., {\bf K}nowledge {\bf T}ill radius {\bf 1}), in which each node has initial knowledge of itself and the  {\em identifiers}\footnote{We stress that only knowledge of the identifiers of the neighbors is assumed, not other information such as the degree of the neighbors.} of its neighbors, singular optimality of all the above fundamental problems is wide open. The \KTOne{} model arises naturally in many settings, e.g., in networks where nodes know the identifiers
 of their neighbors (as well as other nodes), e.g., on the Internet, where a node knows the IP addresses of other nodes.
 Similarly, in models such as the Congested Clique \cite{podc15} and the $k$-machine model \cite{soda15,journal}, it is natural to assume that each processor knows
 the identifiers of all other processors.
 For the \KTOne{} CONGEST model (where messages are of small size, typically $O(\log{n})$), King, Kutten, and Thorup (henceforth, KKT)~\cite{KingKT15} showed that one can solve several fundamental global problems (with the notable exception of BFS tree construction) such as broadcast, leader election, and spanning tree construction in message complexity of $\tilde{O}(n)$ ($n$ is the network size) which can be significantly smaller than $m$. (This is in contrast to the \KTZero{} model where $\Omega(m)$ is a lower bound
 of the message complexity.) Randomization is crucial in obtaining this result, and the algorithm is {\em not} comparison-based.\footnote{Comparison-based algorithms can operate on identifiers only by comparing them, i.e., given two IDs, one can only determine whether one is less, greater, equal to the other.}
We note that Awerbuch, Goldreich, Peleg, and Vainish~\cite{vainish} show that $\Omega(m)$ is a message lower bound for MST even in the \KTOne{} CONGEST model, for all deterministic algorithms (even non-comparison based) and all randomized (even Monte Carlo) {\em comparison-based} algorithms. The KKT result breaks the $\Omega(m)$ message barrier in \KTOne{} CONGEST by using a randomized {\em non-comparison-based} technique that uses node IDs as inputs to hash functions.
 
While the message complexity of the KKT result is near-optimal (as $\Omega(n)$ is a lower bound on the message complexity even in \KTOne{} --- see, e.g., \cite{PaiPP021}), their {\em round complexity} is $\tilde{O}(n)$, which is far from the standard lower bound of $\Omega(D)$ for the leader election, broadcast, and ST problems.  An important open question is whether one can achieve singular optimality for the above problems in the \KTOne{} CONGEST model, i.e., whether there exists an algorithm running in  $\tilde{O}(D)$ rounds and  $\tilde{O}(n)$ messages. Another important and related question is whether the fundamental BFS tree construction can be solved in $\tilde{O}(n)$ messages  (regardless of the number of rounds as long as it is a polynomial in $n$) in \KTOne{}. 

Similarly, for the MST problem, the KKT algorithm also showed that one can achieve $\tilde{O}(n)$ message complexity, but it is {\em not} time-optimal --- it can take significantly more than $\tilde\Theta(D+\sqrt{n})$ rounds, which is a time lower bound for MST that applies even for Monte Carlo randomized algorithms~\cite{stoc11}. In subsequent work, Mashreghi and King~\cite{MashreghiK17} presented a trade-off between messages and time for MST: a Monte Carlo algorithm that uses $\tilde{O}(\frac{n^{1+\epsilon}}{\epsilon}\log\log n)$ messages and runs in $O(n/\epsilon)$ time for any $1 > \epsilon \geq \log \log n/\log n$. This algorithm also takes  $\Omega(n)$ time.
Hence another natural and fundamental open question is whether we can design an MST algorithm in the \KTOne{} CONGEST model that is singularly optimal, i.e., runs in $\tilde{\Theta}(D+\sqrt{n})$ rounds, while using $\tilde{\Theta}(n)$ messages.
We note that a singularly optimal algorithm for such a problem is far from certain.
Robinson~\cite{RobinsonSODA21} showed a message complexity lower bound for constructing a $(2k-1)$-spanner in the \KTOne{} CONGEST model, which implies that obtaining singularly optimal algorithms for graph spanners of $O(n^{1+1/k+\epsilon})$ edges, for any constant $\epsilon>0$, is impossible in polynomial time.
However, for many fundamental graph problems, including broadcast, ST, leader election, and MST, the quest for singularly optimal algorithms is still unresolved.

\medskip

\noindent\textbf{Time-Message Tradeoffs in \KTOne{}.}
 Gmyr and Pandurangan~\cite{GmyrPanduranganDISC18} presented several results that show that tradeoffs between time and messages in the \KTOne{} CONGEST model are possible for various fundamental problems. (See also  the related results of \cite{ghaffarikuhndisc18}.)
 The time-message tradeoff results are based on a uniform and general approach which involves constructing
 a {\em sparsified spanning subgraph} of the original graph --- called a {\em danner} (i.e., ``diameter-preserving spanner'')
  --- that trades off the {\em number of  edges} with the {\em diameter} of the sparsifier.
 A key ingredient of this approach is a distributed randomized algorithm that, given a graph $G$ and any $\delta \in [0,1]$, with a high probability
constructs a danner that has diameter $\tilde{O}(D + n^{1-\delta})$ and $\tilde{O}(\min \{m,n^{1+\delta} \})$ edges in $\tilde{O}(n^{1-\delta})$ rounds while using $\tilde{O}(\min\{m,n^{1+\delta}\})$ messages, where $n$, $m$, and $D$ are the number of nodes, edges, and the diameter of the input graph $G$, respectively.
Using a danner, they show that the leader election, broadcast, and ST problems can be solved in $\tilde{O}(D + n^{1-\delta})$ rounds using $\tilde{O}(\min\{m,n^{1+\delta}\})$ messages for any $\delta \in [0,1]$. Another important consequence of the danner construction is that the MST and connectivity problems can be solved in $\tilde{O}(D+n^{1-\delta})$ rounds using $\tilde{O}(\min\{m,n^{1+\delta}\})$ messages for any $\delta \in [0,0.5]$. In addition to getting any desired tradeoff (by plugging in an appropriate $\delta$), one can get a time-optimal algorithm by choosing $\delta = 0.5$, which results in a distributed MST algorithm that runs in $\tilde{O}(D+\sqrt{n})$ rounds and uses $\tilde{O}(\min\{m,n^{3/2}\})$ messages. (Note that this result is time-optimal but not message-optimal.)
While these results improve over prior bounds in the \KTOne{} CONGEST model~\cite{KingKT15,MashreghiK17,vainish}, they do not answer the question of whether singularly optimal algorithms exist for fundamental global problems in the \KTOne{} CONGEST model, nor do they show that the tradeoff bounds are tight.

\medskip

\noindent\textbf{Singularly Optimality of Local Problems.} The work of Bitton et al.~\cite{BittonEIK19}
is the most similar in spirit to this work and raises and answers the same questions but for {\em local} problems.
By local problems, we mean problems with $t$-round algorithms for small $t$ (say, polylogarithmic in $n$) in the LOCAL model. In particular, they ask the following question: Given a LOCAL algorithm that runs in $t$ rounds, is it possible to design an algorithm that takes $O(t)$ rounds while sending only $O(n^{1+\epsilon})$ messages for an arbitrarily small constant $\epsilon > 0$? They point out that this question can be resolved if one can construct an $\alpha$-spanner of $G$ with $O(1)$ stretch and $O(n^{1+\epsilon})$ edges in $O(1)$ rounds sending $O(n^{1+\epsilon})$ messages.
They also point out that, despite the extensive literature on distributed spanner construction algorithms, it is not known whether such a LOCAL spanner construction algorithm exists.

They then present message-reduction schemes for LOCAL algorithms under the \KTOne{} assumption\footnote{Actually, their results also hold in a somewhat weaker model where edges have unique IDs which are known to both endpoints.}
that preserve their asymptotic time complexity. They point out that the gossip-based algorithms of \cite{censor2012global, Haeupler15}  imply that a $t$-round algorithm in the LOCAL model can be transformed into an algorithm in the LOCAL model  that runs in $O(t\log n+ \log^2 n)$ rounds and $O(nt\log n+ n\log^2 n)$ messages. 
Then they show how to transform a $t$-round LOCAL algorithm into an $O(t)$-round algorithm that uses $O(tn^{1+\epsilon})$ messages for an arbitrarily small constant $\epsilon > 0$. We refer to \cite{BittonEIK19}
for more details.

An  implication of the work of \cite{censor2012global, Haeupler15} is that {\em local} problems admit (nearly) singularly optimal algorithms.
For example, fundamental local problems such as MIS (Maximal Independent Set), coloring, and maximal matching 
admit $O(\log n)$-round algorithms, and hence one can design algorithms for these problems that run
in $O(\log^2 n)$ rounds and $O(n\log^2 n)$ messages in \KTOne{} LOCAL. 
It is important to point out that the above results {\em do not} help in designing singularly optimal algorithms for
{\em global} problems, which require $\Omega(D)$ rounds. This is the focus of this paper. 

\medskip

\noindent\textbf{Open Questions.}
{\em The motivating question underlying this work is understanding the status of various fundamental {\em global} problems in the \KTOne{} model --- whether they are singularly optimal or exhibit trade-offs (and, if so, to quantify the trade-offs).}
In particular,  an important open question is whether one can design a randomized  (non-comparison-based)  algorithm for broadcast or for leader election that takes $\tilde{O}(D)$ time and $\tilde{O}(n)$ messages in the \KTOne{} CONGEST model. 
Also, King et al.~\cite{KingKT15} ask whether it is possible to construct an ST in $o(n)$ rounds with $o(m)$ messages. As mentioned earlier, algorithms that are message-optimal, i.e., taking $\tilde{O}(n)$
messages, take $O(n)$ rounds. The situation is also wide open from a lower bound point of view: no lower bound is known on one measure conditional on the other measure being optimal. In particular, even $o(n)$-rounds message-optimal algorithms are not known.

Another important and related question is whether the fundamental BFS  tree construction 
can be solved in $\tilde{O}(n)$ messages  (in polynomial number of rounds) in \KTOne{}.   We note that the polynomial rounds restriction is necessary since the KKT result has an important implication for the \KTOne{} CONGEST model in general. Using this result, one can show that \textit{any} problem can be solved in the \KTOne{} CONGEST model using $\tilde{O}(n)$ messages, provided \textit{exponentially} many rounds are allowed~\cite{RobinsonSODA21}.
This algorithm uses the KKT result to first construct a spanning tree and uses the ``time encoding'' trick to upcast the entire graph topology up the tree.
In the ``time encoding'' trick, clock ticks are used to encode information, and a node can convey a lot of information by being silent for many clock ticks and then sending a bit at an appropriately chosen clock tick.

The main question that we address in this paper is whether singularly optimal algorithms for the above-mentioned fundamental global problems are possible in the \KTOne{} LOCAL model (where the size of a message is not restricted).
Surprisingly, the answer is yes! Note that this is not obvious. In particular, the KKT algorithm to construct a spanning tree (even) in the LOCAL model takes  $\tilde{O}(n)$ rounds in the worst case. On the other hand, one can construct a BFS tree and aggregate the entire topology in $O(D)$ rounds which is time optimal, but it is not known how to construct a BFS tree in $\tilde{O}(n)$ messages in \KTOne{} LOCAL.

\subsection{Distributed Computing Model}

As is standard, the communication network is modeled as an undirected graph \(G=(V, E)\), $n = |V|$, $m = |E|$.
Nodes in the graph can thus be viewed as processors or machines, and each node has a unique \texttt{ID} drawn from a space whose size is polynomial in $n$. The undirected edges model communication links.  We assume the  \emph{synchronous} communication model, where both computation and communication proceed in lockstep, i.e., in discrete time steps called \textit{rounds}.  In the CONGEST model we allow
only small message sizes (typically logarithmic in $n$, the number of nodes) to be sent per edge per round, whereas in the LOCAL model the message size is unrestricted.
In each round of the synchronous model, each node (i) receives all messages sent to it in the previous round, (ii) performs arbitrary local computation based on the information it has, and (iii) sends a message to each of its neighbors in the graph. 

For an algorithm $\mathcal{A}$ in the synchronous model (whether LOCAL or CONGEST), its \textit{round complexity} is the number of rounds it takes to finish and produce output, and its \textit{message complexity} is the total number of messages sent by all nodes throughout the algorithm. 

The CONGEST and LOCAL models come in two standard versions, depending on the type of initial knowledge available to nodes.
In the \KTZero{} (\textit{Knowledge Till radius 0})  model, also called the {\em clean network model}~\cite{peleg}, nodes have initial knowledge of only themselves and do not know anything about their neighbors (e.g., \texttt{ID}s of neighbors).
In the \KTOne{}  model,  each node has initial knowledge of itself \textit{and} the \texttt{ID}s of its neighbors. 
So in the \KTOne{}  model, a little knowledge about neighbors comes for free.
 Both  \KTZero{} and \KTOne{} have been used extensively in the distributed computing literature for several decades.
From a round complexity point of view, it is unnecessary to distinguish between the \KTZero{} and  \KTOne{} versions because it takes just one round for nodes to gather \texttt{IDs} of all neighbors. However, this distinction turns out to be quite critical for message complexity as discussed earlier.

\subsection{Our Results and Techniques}\label{sec:results}

\noindent\textbf{Our Main Result.}
 Our main result is that singular optimality is achievable in the \KTOne{} LOCAL model for global problems. This partially answers the central motivating question raised above and shows that unrestricted message sizes allow one to solve various global problems in a singularly optimal fashion in \KTOne{}. 
 Specifically, we show that {\em all} global problems, {\em including} the key BFS tree construction problem, can be solved in $\tilde{O}(D)$ rounds and  $\tilde{O}(n)$ messages, where both bounds are optimal up to polylogarithmic factors. Furthermore, we show this can be achieved {\em deterministically}.
We also note that, as mentioned earlier, due to a result of~\cite{vainish}, any algorithm (even randomized) that breaks the $\Omega(m)$ message lower bound has to be {\em non-comparison} based  in \KTOne{} CONGEST. In contrast, our algorithms are {\em comparison-based} thus showing that
the $\Omega(m)$ message lower bound of \cite{vainish} does not apply to \KTOne{} LOCAL.
 
 Although our results (detailed below) do not answer the question of whether singular optimality is possible in 
 \KTOne{} CONGEST, they give several insights into the complexity of solving problems under the \KTOne{} assumption. 

\medskip

\noindent\textbf{Singularly Optimal Algorithms.}
 We present two singularly optimal algorithms, one randomized and one deterministic, for solving the BFS tree construction problem, our main technical contribution. Using the BFS tree construction, we show how to solve several other problems, including leader election. The two algorithms have similar complexities but follow two different approaches, illustrating the power of \KTOne{} LOCAL. 

\medskip

\noindent\textbf{A Local Approach.}
 In the first approach (cf. Section \ref{sec:singularlyOptimalBFS}),  we build a BFS tree in the usual ``level-by-level'' fashion starting from the root node. This clearly takes $O(D)$ rounds. However, if done naively, this BFS exploration from level $i$ to level $i+1$ (root is level 0)  takes a number of messages proportional to the number of {\em edges} between the two levels as well as the edges between nodes in level $i$. This leads to $O(m)$ message complexity overall, where $m$ is the total number of edges.  
 
 To obtain $\tilde{O}(n)$ message complexity instead, a crucial idea is to use a {\em sparse neighborhood cover} to ensure that the number of messages needed to {\em grow the BFS tree from level $i$ to level $i+1$}  is essentially (up to the $O(\log n)$ factor) proportional to the number of {\em  BFS tree edges} between the levels. Importantly,  we show that sparse neighborhood cover construction 
 (which is done as a preprocessing step) itself can be done singularly optimally in $O(\log^3 n)$ rounds and $O(n\log^3 n)$
 messages in \KTOne{} LOCAL (cf. Section \ref{subsec:randomizedNeighborhoodCovers}). 
 We refer to Section \ref{sec:singularlyOptimalBFS} for details regarding how the cover is useful in avoiding sending unnecessary messages through non-BFS tree edges. This approach is randomized since we use the sparse neighborhood cover construction algorithm
 due to Elkin \cite{Elkin06}, which is randomized. 
 We modify this algorithm to use $\tilde{O}(n)$ messages (we note that the message complexity of the original implementation is $\tilde{O}(m)$ \cite{Elkin06}). As noted at the end of Section \ref{sec:singularlyOptimalBFS}, we could have also used a deterministic cover construction scheme. However, this has a significantly higher (in terms of polylog factors) message and time complexities 
 compared to the randomized algorithm and a more efficient deterministic algorithm that uses a global approach (described below).

 The main feature of the local approach algorithm is that BFS tree construction can be done in a fashion where {\em each node
 needs to know information only from its local neighborhood}. Specifically, the cover construction algorithm 
 and the subsequent BFS exploration can be implemented in such a way that each node needs to know the topology information of nodes (including its IDs and state information) of only its $O(\log n)$ neighborhood (cf. Section \ref{sec:singularlyOptimalBFS}).

\medskip

\noindent\textbf{A Global Approach.}
Our second algorithm (cf. Section \ref{sec:deterministicSolution}) for BFS tree (which is also deterministic) uses a different approach.  It first uses a deterministic {\em gossip-based} algorithm due to Haeupler~\cite{Haeupler15} to construct a sparse spanning subgraph $H$ of the original graph $G$.

In gossip, each node, in each round, can send at most {\em one} message (of arbitrary size) to one neighbor. Gossip-based communication is very lightweight (only $O(n)$ messages are sent by all nodes in a round) and has been studied extensively~\cite{KempeDG03}. Here we use a result due to Haeupler which showed that each node can (locally) broadcast a message to {\em all} nodes within a distance $k$ in $O(k\log n + \log^2 n)$ rounds in the \KTOne{} LOCAL model~\cite{Haeupler15}. Let the edges used in the  Haeupler algorithm to send messages constitute the subgraph $H$.
We show that  $H$ has only $O(n\log n)$ edges and is
a $O(\log n)$-spanner, i.e., all distances in $H$ between each of pair of nodes are at most a $O(\log n)$ factor of the (true) distance in $G$ \cite{peleg}. In particular, the  property we need is that the diameter of $H$ is $O(D \log n)$ where $D$ is the diameter of $G$. Once $H$ is built, one can simply do a usual flooding-based BFS tree construction (from the given source node) {\em in $H$} as in the \KTZero{} CONGEST model (e.g., see \cite{peleg}). Since $H$ has $O(n\log n)$ edges and $O(D\log n)$ diameter, these determine the same message and time complexities respectively for the BFS tree construction in $H$. 

Once a BFS tree $T_H$ has been constructed in $H$, we use it to collect the entire topology of $G$ at the root of $T_H$  by convergecasting. In the LOCAL model, this takes $O(D\log n)$ rounds and $O(n\log n)$ messages. 
Since the root has the entire topology, it can construct a BFS tree on $G$ locally, and then broadcast the solution to all nodes in $O(D\log n)$ and $O(n\log n)$ messages using the BFS tree of $H$.

This approach to BFS construction is ``global'' since the root needs the entire graph's topology information.

\medskip

\noindent\textbf{Leader Election.}
Both approaches enable one to solve leader election efficiently in \KTOne{} LOCAL as well. If we allow randomization (cf. Section \ref{sec:singularlyOptimalBFS}), we can start with $\Theta(\log n)$
candidates (this can be achieved by each node becoming a candidate with probability $\Theta(\log n/n)$).
Then each of the candidates can run the BFS tree construction algorithm in parallel to elect a leader among the
$\Theta(\log n)$ candidates (e.g., the one with the highest ID). 

In the deterministic setting (cf. Section \ref{sec:deterministicSolution}), we simply run a deterministic leader election algorithm of Kutten et al.~\cite{jacm15} on the $O(\log n)$-spanner $H$ of $G$ constructed earlier.

\medskip

\noindent\textbf{Global Problems.}
In both approaches, one can solve any global problem (i.e., a problem that requires $\Omega(D)$ rounds) in \KTOne{} LOCAL in a singularly optimal fashion, i.e., in $\tilde{O}(D)$ rounds and $\tilde{O}(n)$ messages, as follows. Once a leader is elected and a BFS tree is built, one can collect the entire topology by convergecasting it to the root in $O(D)$ rounds and using $\tilde{O}(n)$ messages. The root solves the problem locally and broadcasts the solution with the same time and message bounds. 

\medskip
 
\noindent\textbf{Summary of Results.}
To summarize, we present the following results.
\begin{itemize}
\item A distributed randomized Monte Carlo algorithm that constructs a BFS tree (given a designated root node) in time $O(D \log n + \log^3 n)$ and using $O(n \log^3 n)$ messages in the \KTOne{} LOCAL model. 
\item A distributed deterministic algorithm that constructs a BFS tree (given a designated root node) in time $O(D \log n + \log^2 n)$ and using $O(n \log^2 n)$ messages in the \KTOne{} LOCAL model. While we note that the bounds for the deterministic algorithm for BFS are better than the corresponding randomized algorithm, the deterministic algorithm follows the global approach where the root node learns information about the entire topology, unlike the randomized algorithm that follows a local approach where nodes learn information only within their $O(\log n)$-radius neighborhood.
\item A distributed randomized Monte Carlo algorithm that elects a leader in time $O(D \log n + \log^3 n)$ and using $O(n \log^4 n)$ messages in the \KTOne{} LOCAL model. 
\item A distributed deterministic algorithm that elects a leader in time $O(D \log^2 n + \log^2 n)$ and using $O(n \log^2 n)$ messages in the \KTOne{} LOCAL model. 
\item All global problems, including broadcast, MST, shortest paths, minimum cut, etc., can be solved in the above (leader election) randomized and deterministic time and message bounds, respectively. 
\end{itemize}

\subsection{Other Related Work}

In the \KTOne{} model,  the early work
of Awerbuch et al.~\cite{vainish} studied time-message trade-offs for broadcast.
King et al.~\cite{KingKT15} surprisingly showed that the basic $\Omega(m)$ message lower bound that holds in the \KTZero{} model for various problems such as leader election, broadcast, MST, and more~\cite{jacm15} does not hold in the \KTOne{} model by showing a randomized Monte Carlo algorithm to construct an MST (which also holds for leader election and broadcast) in $\tilde{O}(n)$ messages  and in $\tilde{O}(n)$ time. Their algorithm uses the powerful randomized technique of {\em graph sketching} to identify edges going out of a cut efficiently, without probing all the edges in the cut; this crucially helps in reducing
the message complexity. In contrast, our work shows that graph sketching is not necessary to obtain $\tilde{O}(n)$
message complexity in \KTOne{} LOCAL. 

There has also been work in understanding the message complexity of local problems such as MIS, coloring,
and maximal matching in \KTZero{}
and \KTOne{}. In the \KTZero{} CONGEST  model, Pai et al. \cite{disc17} showed that any MIS algorithm requires $\Omega(n^2)$ messages. The same quadratic lower bound  was shown to hold for $(\Delta+1)$-coloring \cite{PaiPP021} and also for maximal matching \cite{itcs24}. These lower bounds are all tight, to within a logarithmic factor, since Luby's algorithm solves these problems using $\tilde{O}(m)$ messages.

In contrast, the message complexity of these problems is largely unknown in the \KTOne{} CONGEST model.
Pai et al.~\cite{PaiPP021} showed an $\Omega(n^2)$ lower bound for  \textit{comparison-based} algorithms for MIS, $(\Delta+1)$-coloring, and maximal matching in this model. 
However, whether one can obtain an $o(m)$-message algorithm  is open for arbitrary algorithms. It is worth noting that a quadratic lower bound for comparison-based algorithms does not constitute strong evidence that the problem may have a quadratic lower bound in general.
For example, it was  shown that $(\Delta+1)$-coloring can also be solved using $\tilde{O}(n^{1.5})$ messages in $\tilde{O}(D + \sqrt{n})$ rounds 
in the \KTOne{} CONGEST model~\cite{PaiPP021}.

The quest for singularly optimal distributed algorithms has also been investigated in the asynchronous setting~\cite{MashreghiK21,KuttenMPP21,DufoulonKMPP22} and with more relaxed assumptions on the nodes' initial knowledge~\cite{JiP24}.

Recently there has been a lot of interest in  designing {\em awake or energy-efficient} algorithms 
for various problems in the LOCAL, CONGEST, and radio network models---see, e.g., \cite{podc2020,BM21,DMP23,
ghaffari-podc2023,FMRT23,CDHHLP18, dani22,trygub,sleeping-mst} and references therein. In these works, the goal is to minimize the number of rounds a node is active (when it could be sending/receiving messages). It might be interesting to study singular optimality in the context of awake or energy-efficient algorithms,
where the goal is to simultaneously minimize the message, round, and awake/energy complexities.

\section{Message-Efficient Auxiliary Primitives}
\label{sec:auxPrimitives}

We first provide several primitives that play a crucial role in our randomized singularly optimal BFS \localKTOne{} algorithm (see Section \ref{sec:singularlyOptimalBFS}). In particular, the singularly optimal sparse neighborhood cover construction in Subsection~\ref{subsec:randomizedNeighborhoodCovers} may be of independent interest.

\subsection{Cluster Communication}
\label{subsec:clusterCommunication}

We give some message-efficient primitives for cluster communication: first within a cluster, then with nodes outside. We assume each cluster has an associated (spanning) cluster tree, whose root and edges are (distributedly) known to the network.

Communication within the cluster can be done using standard \emph{broadcast} and \emph{convergecast}. In the broadcast operation, the root holds a message (unlimited in size, as we are in \localKTOne{}) and sends it to its children in the cluster tree. These children in turn send the same message to their children, until it reaches the leaves of the cluster tree and terminates. The convergecast is essentially the reverse operation, where each node contains some (possibly unique) message and sends it to its parent. That parent waits until it receives one message per child, then concatenates its children's messages (without running into bandwidth constraint as we are in \localKTOne{}) and sends that combined message to its own parent. Once the root node receives messages from all of its children, the operation terminates. The runtime and message complexity are straightforward to bound in the \localKTOne{} model:

\begin{proposition}
\label{prop:broadcastConvergecast}
    The broadcast and convergecast operations, over some cluster $C$ with a cluster tree of depth $d$, respectively take $d$ rounds and use $|C|$ messages in \localKTOne{}.
\end{proposition}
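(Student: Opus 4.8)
The plan is to bound the two operations separately, since they are duals of one another. For the \textbf{broadcast} operation, I would argue by the structure of the cluster tree. A message originates at the root and travels down the tree, with each node forwarding the message to each of its children. Thus a message is sent along an edge of the cluster tree exactly once, in the direction from parent to child. Since the cluster tree is a spanning tree of $C$ with $|C|$ nodes, it has exactly $|C|-1$ edges, so the total number of messages is $|C|-1 \le |C|$. For the round complexity, I would observe that in each round the message advances one level deeper into the tree: after round $i$, every node at depth $i$ has received the message. Since the tree has depth $d$, after $d$ rounds every node has received the message and the operation terminates, giving $d$ rounds.

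For the \textbf{convergecast} operation, I would note that it is essentially the reverse: each non-root node sends exactly one (concatenated) message to its parent, again using each tree edge exactly once, so the message count is again $|C|-1 \le |C|$. Here the crucial point is that a node waits until it has received a message from \emph{every} child before forwarding its combined message upward. I would argue by induction on the height (distance to the deepest descendant leaf): a node whose subtree has height $h$ will have received all its children's messages and forwarded its own by round $h$, since each child at height at most $h-1$ completes by round $h-1$. Because the deepest leaf is at depth $d$, the root's subtree has height $d$, so the root receives all messages and the operation terminates in $d$ rounds.

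The only subtlety — and the one place worth stating carefully rather than the routine counting — is the timing argument for convergecast, because a node's forwarding time depends on the \emph{slowest} branch of its subtree rather than on a simple per-round advance. The clean way to handle this is the induction on subtree height sketched above, which shows that the completion time of a subtree equals its height and is therefore bounded by the tree depth $d$. I do not expect any genuine obstacle here: because we are in the \localKTOne{} model, there is no bandwidth constraint, so the concatenation of children's messages in a convergecast incurs no extra rounds or messages, and both bounds follow directly from the tree-edge count and the level/height arguments.
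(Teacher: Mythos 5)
Your proof is correct and fills in exactly the routine argument the paper leaves implicit (the paper states the bounds as "straightforward to bound" and gives no explicit proof): one message per tree edge for each operation, level-by-level advancement for broadcast, and induction on subtree height for convergecast. Nothing further is needed.
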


For some cluster, say $C$, to communicate message-efficiently with its immediate neighbors (i.e., nodes outside $C$ but with a neighbor in $C$), it suffices to augment the cluster tree with just enough additional edges to ensure it also spans all of these immediate neighbors -- which we call the \emph{outer boundary nodes}. These edges can be computed as follows. 
First, the cluster convergecasts using the (original) cluster tree; each cluster node's message contains its ID and that of all its neighbors, known due to the \KTOne{} assumption. When the convergecast terminates, the root knows the IDs of all cluster nodes as well as the IDs of both endpoints of all edges incident to any cluster node. In particular, the root knows which edges lead out of the cluster. With that information, the root computes a \emph{minimal outgoing edge set}: an edge set such that each node in the outer boundary of $C$ is the endpoint of exactly one edge in the set. (In some cases, we can even ask for the \emph{lexicographically first minimal outgoing edge set}: for each node in the outer boundary, its incident edge in that set is the lexicographically first edge among its incident edges leading to $C$.) Finally, the root broadcasts that minimal outgoing edge set over the (original) cluster tree. Cluster nodes add any edge in the minimal outgoing edge set to its (distributed representation) of the augmented cluster tree (and the other endpoints of these edges are also informed). After which, it is easy to see that the broadcast and convergecast operations on the augmented cluster tree allow the cluster $C$ to broadcast messages to, or convergecast messages from, its outer boundary. Similarly to the previous primitive, the runtime and message complexity are straightforward to bound in the \localKTOne{} model:

\begin{proposition}
\label{prop:outsideCommunication}
    Let $C$ be some cluster with a cluster tree of depth $d$. Then, computing the augmented tree, as well as broadcast and convergecast over that augmented tree of $C$ (i.e., over $C$ and the outer boundary $B$), takes $O(d)$ rounds and $O(|C|+|B|)$ messages in \localKTOne{}.
\end{proposition}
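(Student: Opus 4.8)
The plan is to decompose the claimed operation into its \emph{preprocessing} phase (computing the augmented tree) and the subsequent broadcast/convergecast over it, and to charge each phase either to an application of Proposition~\ref{prop:broadcastConvergecast} or to a single round of direct communication across selected edges.

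First I would account for the preprocessing exactly as described before the statement. The initial convergecast over the original cluster tree, in which every node forwards its own \texttt{ID} together with the \texttt{ID}s of its neighbors (available for free under the \KTOne{} assumption), costs $d$ rounds and $|C|$ messages by Proposition~\ref{prop:broadcastConvergecast}; crucially, since we are in \localKTOne{}, the enlarged messages carrying whole neighbor lists incur no bandwidth penalty and do not change the message \emph{count}. The root then computes the minimal outgoing edge set purely locally, with no communication. Broadcasting this edge set back down the original tree again costs $d$ rounds and $|C|$ messages. Finally, each cluster node that is the $C$-endpoint of a selected edge sends a single message across that edge to the corresponding outer boundary node; since the minimal outgoing edge set contains exactly one edge per node of $B$, this step uses exactly $|B|$ messages and one round.

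The key structural observation, and the one step needing an explicit (if short) argument, is that the augmented tree really is a tree. We start from the spanning cluster tree on $C$ and attach each boundary node as a pendant leaf via its unique selected edge; because each boundary node is the endpoint of exactly one such edge, no cycle can be created, and the result is a tree on $|C|+|B|$ nodes of depth at most $d+1$. With this in hand, broadcast and convergecast over the augmented tree are direct instances of Proposition~\ref{prop:broadcastConvergecast} applied to a structure of size $|C|+|B|$ and depth $d+1$, giving $O(d)$ rounds and $O(|C|+|B|)$ messages.

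Summing the preprocessing contributions ($d+d+O(1)$ rounds and $|C|+|C|+|B|$ messages) with the final broadcast/convergecast ($O(d)$ rounds and $O(|C|+|B|)$ messages) yields the claimed $O(d)$ rounds and $O(|C|+|B|)$ messages. I do not expect any genuine obstacle: the only points requiring care are verifying acyclicity of the augmented tree -- guaranteed precisely by the ``exactly one incident selected edge per boundary node'' property, which is why a \emph{minimal} outgoing edge set is used rather than all outgoing edges -- and confirming that the \localKTOne{} model absorbs the large convergecast messages without inflating the message count.
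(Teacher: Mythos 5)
Your proposal is correct and follows essentially the same route as the paper, which states the proposition as a straightforward consequence of the construction described just before it (convergecast of neighbor lists, local computation of the minimal outgoing edge set at the root, broadcast back down, then broadcast/convergecast over the augmented tree via Proposition~\ref{prop:broadcastConvergecast}). Your explicit check that attaching each boundary node as a pendant leaf via its unique selected edge preserves the tree structure and adds only one unit of depth is a useful detail the paper leaves implicit, but it does not change the argument.
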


\subsection{Naive BFS Exploration}

The above-described message-efficient cluster communication primitives give a naive but message-efficient primitive for BFS exploration, rooted in some node $r$, up to some small depth $h \geq 1$, say $h = \tilde{O}(1)$.
Let us describe this primitive, called BFSExploration, in more detail. A cluster (with a corresponding cluster tree) is built around the root $r$ in $h$ phases, one layer per phase. In phase $i \in [1,h]$, the cluster computes an augmented cluster tree and broadcasts a BFS exploration message (containing the ID of the cluster's root) over the augmented tree. Cluster nodes follow the broadcast protocol but otherwise ignore the BFS exploration message, whereas the outer boundary nodes join the BFS tree upon receiving a BFS exploration message---the edge along which they receive that message is the edge leading to their parent in the BFS tree.

\begin{proposition}
\label{prop:BFSExploration}
    Let $C$ be the cluster obtained via BFSExploration for some $h \geq 1$. Then, BFSExploration takes $O(h^2)$ rounds and uses $O(|C| \cdot h)$ messages.
\end{proposition}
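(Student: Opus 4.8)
The plan is to analyze BFSExploration phase by phase, using the cluster-communication primitives of Proposition~\ref{prop:outsideCommunication} as a black box. Let $C_i$ denote the cluster after phase $i$ completes (so $C_0 = \{r\}$ and $C = C_h$), and let $B_i$ be its outer boundary. The key structural observation is that the cluster grows monotonically, $C_0 \subseteq C_1 \subseteq \cdots \subseteq C_h = C$, since outer boundary nodes that receive a BFS exploration message in phase $i$ join the cluster (and hence become part of $C_i$). In particular, every node counted in any intermediate cluster $C_i$ or boundary $B_i$ is itself a node of the final cluster $C$, because $B_i \subseteq C_{i+1} \subseteq C$ for each $i < h$. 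This containment is what will let me charge all per-phase costs back to $|C|$.

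For the round complexity, I would observe that phase $i$ consists of computing the augmented tree and performing one broadcast over it. By Proposition~\ref{prop:outsideCommunication}, each of these operations takes $O(d_i)$ rounds, where $d_i$ is the depth of the cluster tree of $C_i$ (or $C_{i-1}$, depending on exactly when the augmentation happens — the constant is absorbed either way). Since each phase adds at most one BFS layer, the depth satisfies $d_i = O(i) = O(h)$ throughout. Summing over the $h$ phases gives $\sum_{i=1}^{h} O(h) = O(h^2)$ rounds, as claimed.

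For the message complexity, Proposition~\ref{prop:outsideCommunication} tells me that phase $i$ uses $O(|C_{i-1}| + |B_{i-1}|)$ messages. The naive bound $|C_{i-1}| + |B_{i-1}| \le 2|C|$ (using the containment $C_{i-1}, B_{i-1} \subseteq C$ noted above) already yields $O(|C| \cdot h)$ when summed over the $h$ phases, which is exactly the stated bound. So the argument is essentially a telescoping-by-domination: each phase costs at most $O(|C|)$ messages because its cluster and boundary are subsets of the final cluster, and there are $h$ phases.

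The main thing to be careful about — rather than a deep obstacle — is the accounting for the outer boundary $B_{i-1}$. One must confirm that boundary nodes are indeed absorbed into the cluster so that $|B_{i-1}|$ is charged against $|C|$ and not against $m$; this is precisely the point of the \KTOne{} primitive, which communicates only with the \emph{minimal} outgoing edge set (one edge per boundary node) rather than across all cut edges. If instead the primitive paid for every edge leaving the cluster, the per-phase cost would be $\Theta(\text{cut size})$ and the total would degrade toward $O(m)$. Thus the crux is simply invoking Proposition~\ref{prop:outsideCommunication} correctly, so that the $O(|C_{i-1}| + |B_{i-1}|)$ message bound holds with the boundary term linear in the number of boundary \emph{nodes}; everything else is a routine sum.
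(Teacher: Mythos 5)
Your proof is correct and follows the same approach as the paper's: bound each of the $h$ phases by $O(i)=O(h)$ rounds and $O(|C_{i-1}|+|B_{i-1}|)=O(|C|)$ messages via Proposition~\ref{prop:outsideCommunication}, using the fact that no node ever leaves the cluster so every intermediate cluster and boundary is contained in the final $C$. Your write-up just makes the containment argument explicit where the paper compresses it into a parenthetical.
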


\begin{proof}
    For any $i \geq 1$, phase $i$ starts with a cluster tree of depth $i$, computes its augmented tree and broadcasts an exploration message over it. By Proposition \ref{prop:outsideCommunication}, this takes $O(i)$ rounds and $O(|C|)$ messages (as no node ever leaves the cluster). The  statement follows.
\end{proof}

This naive BFS exploration algorithm uses few messages when $h$ is small, say $\tilde{O}(1)$, and thus builds a BFS tree message-efficiently for small diameter graphs. However, in general graphs, both its round complexity and message complexity are too high. In the next subsection we use that naive algorithm to obtain sparse neighborhood covers, and these covers in turn allow us to obtain (near) singularly-optimal BFS algorithms in the \localKTOne{} model.

\subsection{Randomized Sparse Neighborhood Cover Computation}
\label{subsec:randomizedNeighborhoodCovers}

Sparse neighborhood covers, defined next, were introduced in~\cite{AP90} and have found various applications in distributed computing such as in routing and in the design of efficient synchronizers. Here, they play a crucial role within our singularly-optimal BFS algorithm (see Section \ref{sec:singularlyOptimalBFS}). Below, we define sparse neighborhood covers.

\begin{definition}\label{def:cover}
A \emph{sparse $(\kappa,W)$-neighborhood cover} of a graph is a collection $\cC$ of trees,
each called a {\em cluster}, with the following properties.
\begin{itemize}
\item \emph{(Depth property)} For each tree $\tau \in \cC$, depth$(\tau) = O(W \cdot \kappa)$.
\item \emph{(Sparsity property)} Each vertex $v$ of the graph appears in $\tilde O(\kappa \cdot n^{1/\kappa})$ different trees $\tau \in \cC$.
\item \emph{(Neighborhood property)} For each vertex $v$  of the graph there exists a tree $\tau \in \cC$ that contains the entire $W$-neighborhood of vertex $v$.
\end{itemize}
\end{definition}

Next, we give a randomized, message-efficient algorithm, called CoverConstruction, for computing a sparse neighborhood cover in \localKTOne{} --- i.e., $\tilde{O}(n)$ message complexity. We do so by a simple modification of the distributed (randomized) cover construction due to Elkin~\cite{Elkin06}. Note that this modification is necessary, since the cover construction given in~\cite{Elkin06} uses $\Omega(m)$ messages, even in \localKTOne{} (which is considered message-inefficient here).

Let us briefly describe the distributed (randomized) cover construction of Elkin~\cite{Elkin06}. The algorithm runs in $\kappa$ phases, each of $O(\kappa \cdot n^{1/\kappa} \log n \cdot \kappa W)$ rounds. Initially, nodes are uncovered. Then, in each phase $i \in [1,\kappa]$, a well-chosen number of uncovered vertices initiate a BFS exploration of depth $2((\kappa-i)+1) W$, each such exploration forming a cluster of the cover. (The property of the BFS exploration primitive ensures the depth property of the cover, while the judicious choice of the number of sources per phase ensures the sparsity property---see Lemma~\ref{lem:smallMaxDegreeOfCover} next.) Nodes that join such a cover, within at most $2(\kappa-i) W$ hops of the root, becomes covered; indeed, its $W$-hop neighborhood is contained in that cluster, and hence in the cover, thus satisfying its neighborhood property.

CoverConstruction modifies the above algorithm as follows: the (message-inefficient) naive BFS exploration primitive used in that algorithm is replaced with our (message-efficient) BFSExploration primitive. It follows that most properties of distributed (randomized) cover construction due to Elkin~\cite{Elkin06} directly translate over to the modified version. In particular, the correctness of CoverConstruction follows directly. Additionally, it also holds for CoverConstruction that with high probability, the maximum degree of the constructed cover is small (see Lemma \ref{lem:smallMaxDegreeOfCover}); in other words, each vertex appears in a small number of clusters of the cover, in fact in $O(\kappa \cdot n^{1/\kappa} \log n)$ of them. This last property will be the key to bounding the message complexity of our CoverConstruction algorithm, along with Proposition \ref{prop:BFSExploration}. 

\begin{lemma}[Corollary A.6 in \cite{Elkin06}]
\label{lem:smallMaxDegreeOfCover}
    The sparse neighborhood cover $\cC$ computed by CoverConstruction satisfies the following: with high probability, for every vertex $v \in V$,
    $|\{C \in \cC \mid v \in C\}| = O(\kappa \cdot n^{1/\kappa} \log n)$.
\end{lemma}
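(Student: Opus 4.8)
The plan is to reduce the statement directly to the analysis underlying Elkin's original cover construction, since the lemma is precisely Corollary A.6 of \cite{Elkin06} restated for the modified algorithm. The crucial observation is that CoverConstruction and Elkin's construction produce the \emph{identical} cover, as random objects: the only difference between the two is the internal BFS exploration subroutine---our message-efficient BFSExploration in place of the naive flooding used in \cite{Elkin06}---and this subroutine affects only how messages are \emph{routed}, not which vertices end up in which cluster. Consequently, the combinatorial cover $\cC$ and its distribution are unchanged, and the sparsity bound can be inherited as a black box.

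First I would verify this equivalence carefully. In each phase $i \in [1,\kappa]$, a cluster is formed by a BFS exploration of a fixed depth $2((\kappa-i)+1)W$ from a chosen source, and a vertex joins that cluster precisely when it is reached within the prescribed hop-distance. Both the naive exploration and BFSExploration reach \emph{exactly} the set of vertices within that hop-distance of the source: BFSExploration grows the cluster layer by layer, one layer per phase, adding in each step the outer boundary of the current cluster (cf.\ Proposition~\ref{prop:BFSExploration}), so after $h$ layers it has reached all and only the vertices within $h$ hops---it merely does so while charging messages against the augmented cluster trees rather than against all incident edges. Since the source-selection rule (the ``well-chosen'' number of uncovered initiators per phase) is copied unchanged, the set of clusters, the membership of each vertex in each cluster, and all random choices are distributed identically across the two algorithms.

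Given this equivalence, the bound follows verbatim from \cite{Elkin06}, and for completeness I would recall the shape of the argument: the sources in each phase are chosen so that, with high probability, every vertex lies in at most $O(n^{1/\kappa}\log n)$ of the clusters created in that phase; a union bound over the $\kappa$ phases and over all $n$ vertices then yields that simultaneously every $v \in V$ satisfies $|\{C \in \cC \mid v \in C\}| = O(\kappa \cdot n^{1/\kappa}\log n)$ with high probability. The main obstacle is therefore not the probabilistic calculation, which \cite{Elkin06} supplies in full, but rather arguing the invariance above cleanly---that swapping the exploration primitive leaves both the combinatorial cover and its probability distribution intact. Once this is established, Corollary A.6 of \cite{Elkin06} applies directly and the lemma follows.
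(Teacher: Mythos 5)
Your proposal is correct and matches the paper's treatment: the paper does not reprove this bound but simply imports Corollary A.6 of Elkin's work, observing (as you argue in more detail) that replacing the naive exploration with the message-efficient BFSExploration changes only how messages are routed, not which vertices join which cluster, so the cover and its distribution are unchanged. Your careful justification of that invariance is a reasonable elaboration of what the paper leaves implicit.
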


\begin{lemma}
\label{lem:neighborhoodCover}
    There exists a distributed randomized Monte Carlo algorithm that constructs (with high probability) a $(\kappa, W)$-sparse neighborhood cover in time $O(\kappa^3 W^2)$ and using $O(n \cdot \kappa \cdot n^{1/\kappa} \log n \cdot \kappa W)$ messages in the \localKTOne{} model. 
\end{lemma}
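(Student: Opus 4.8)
The plan is to bound the round and message complexity of CoverConstruction directly from its phase structure, relying on the BFSExploration primitive (Proposition~\ref{prop:BFSExploration}) for each individual cluster and on the sparsity guarantee (Lemma~\ref{lem:smallMaxDegreeOfCover}) for the aggregate bounds. Correctness together with the depth, sparsity, and neighborhood properties are inherited from Elkin's analysis (as noted in the text preceding the statement), so I only need to establish the two complexity estimates; the message bound will hold with high probability precisely because it rests on Lemma~\ref{lem:smallMaxDegreeOfCover}, which is why the algorithm is Monte Carlo.

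First, for the round complexity. The algorithm runs $\kappa$ phases, and in phase $i$ every selected uncovered source launches a BFSExploration of depth $h_i = 2((\kappa-i)+1)W \le 2\kappa W = O(\kappa W)$. By Proposition~\ref{prop:BFSExploration}, a single such exploration runs in $O(h_i^2) = O(\kappa^2 W^2)$ rounds. The key observation is that in the \localKTOne{} model message sizes are unbounded, so all sources active in a phase can run their explorations fully \emph{in parallel} without inflating the round count: the explorations advance layer by layer in lockstep, and each node simply packs the broadcast/convergecast state it maintains for every cluster (and augmented tree) it currently belongs to into its single, unbounded-size message each round. Hence one phase costs $O(\kappa^2 W^2)$ rounds, and all $\kappa$ phases cost $O(\kappa^3 W^2)$ rounds in total.

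Next, for the message complexity. Each cluster $C \in \cC$ is produced by exactly one BFSExploration of depth $O(\kappa W)$, which by Proposition~\ref{prop:BFSExploration} uses $O(|C| \cdot \kappa W)$ messages. Summing over all clusters and factoring out the common depth term yields a total of $O(\kappa W) \cdot \sum_{C \in \cC} |C|$ messages. I then rewrite the sum by counting node--cluster incidences, namely $\sum_{C \in \cC} |C| = \sum_{v \in V} |\{C \in \cC \mid v \in C\}|$, and invoke Lemma~\ref{lem:smallMaxDegreeOfCover}, which guarantees that with high probability every vertex lies in $O(\kappa \cdot n^{1/\kappa} \log n)$ clusters. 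Therefore $\sum_{C \in \cC} |C| = O(n \cdot \kappa \cdot n^{1/\kappa} \log n)$, and the total message complexity is $O(\kappa W) \cdot O(n \cdot \kappa \cdot n^{1/\kappa} \log n) = O(n \cdot \kappa \cdot n^{1/\kappa} \log n \cdot \kappa W)$, matching the claim.

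The step I expect to be the main obstacle is making the parallel-execution claim in the round-complexity argument rigorous: I must argue that running many, possibly overlapping, BFSExploration instances concurrently does not inflate the $O(h^2)$ bound of a single instance. This hinges entirely on the \localKTOne{} assumption, under which a node can aggregate into one unbounded message per round the traffic of every cluster tree and augmented tree it participates in, so the per-round cost of the parallel composition equals that of the slowest single instance rather than the sum of their costs. By contrast, the message bound is essentially mechanical once Proposition~\ref{prop:BFSExploration} and Lemma~\ref{lem:smallMaxDegreeOfCover} are in hand; the only subtlety there is that the sparsity bound, and hence the message bound, holds only with high probability.
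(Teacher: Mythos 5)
Your proposal is correct and follows essentially the same route as the paper's proof: round complexity from $\kappa$ phases each costing $O((\kappa W)^2)$ rounds via Proposition~\ref{prop:BFSExploration}, and message complexity by charging $O(|C|\cdot \kappa W)$ per cluster and bounding $\sum_{C\in\cC}|C|$ through the node--cluster incidence count of Lemma~\ref{lem:smallMaxDegreeOfCover}. Your explicit justification that concurrent explorations can be aggregated into single unbounded-size messages per round is a point the paper leaves implicit, but it does not change the argument.
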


\begin{proof}
    The correctness (with high probability) follows from that of Elkin's algorithm. Next, we bound the round complexity. Each phase runs BFS explorations up to $2 \kappa W$ hops. By Proposition \ref{prop:BFSExploration}, these BFS explorations take $O((\kappa W)^2)$ rounds. This amounts, over $\kappa$ phases, to $O(\kappa^3 W^2)$ rounds.
    As for the message complexity, it suffices to note that messages are only sent during the BFS explorations, and those are of depth at most $2 \kappa W$. By Proposition \ref{prop:BFSExploration}, the BFS exploration that computes some cluster $C$ uses $O(|C| \kappa W)$ messages. By Lemma \ref{lem:smallMaxDegreeOfCover}, each node is contained in at most $O(\kappa \cdot n^{1/\kappa} \log n)$ clusters of the cover (over all phases). Hence, if $\cC$ is the computed cover, then $\sum_{C \in \cC} |C| = O(n \cdot \kappa \cdot n^{1/\kappa} \log n)$ and thus the message complexity of CoverConstruction is $O(n \cdot \kappa \cdot n^{1/\kappa} \log n \cdot \kappa W)$.
\end{proof}

\section{Singularly-Optimal BFS using Sparse Neighborhood Covers}
\label{sec:singularlyOptimalBFS}

We provide a BFS tree construction algorithm (called BFSConstruction) that takes $\tilde{O}(D)$ rounds and $\tilde{O}(n)$ messages in \localKTOne{}. This is the first singularly-optimal BFS algorithm in \localKTOne{}, and the main difficulty lies in obtaining $\tilde{O}(n)$ message complexity without blowing up the runtime (say, to linear in $n$ or more). The improved message complexity is obtained using sparse neighborhood covers, which can be computed using CoverConstruction (see Subsection \ref{subsec:randomizedNeighborhoodCovers}) for example. Note that sparse neighborhood covers were previously used in asynchronous BFS tree construction algorithms~\cite{AP90a,APPS92} for synchronization purposes, whereas we instead use them in a synchronous setting to reduce messages (which  is new to  the best of our knowledge).

We now describe our BFSConstruction algorithm. A BFS tree is built around some (initially given) root node $r \in V$, layer by layer. First, nodes build a ($\kappa$, 2)-cover of $G$, for some well-chosen $\kappa = \Theta(\log n)$. After which, the BFS tree is built in (up to) $D$ phases, each of $O(\log n)$ rounds. In each phase, each node neighboring the current BFS tree receives exactly one message and joins the BFS tree; this adds a layer to the BFS tree in a message-efficient manner. We do this by computing a (lexicographically-first) minimal outgoing edge set (see Subsection \ref{subsec:clusterCommunication}) of the current BFS tree (or cluster), and sending BFS exploration messages via this edge set. 

More concretely, in phase $i \geq 1$, frontier nodes (i.e., on layer $i-1$) of the current BFS tree (or cluster) compute the lexicographically-first minimal outgoing edge set $O_i$ in two stages. The first stage takes $2 \kappa = O(\log n)$ rounds, during which all frontier nodes \emph{ping} each cluster of the cover they belongs to: that is, they send a message to their parent, who in turn sends it to their parent, and so on, ensuring the messages is sent all the way up to the root of the (cover) cluster's tree. (Note that this is enough time for the message to reach the root, since each cover cluster tree has depth $2 \kappa$.) The second stage takes $6 \kappa = O(\log n)$ rounds, during which the root of the cover cluster broadcasts a message to all nodes which then starts a convergecast; nodes in the tree convergecast their ID, whether they are in the BFS, and the information about their incident edges (that is, the IDs of the two endpoints) and upon termination of the convergecast, the root broadcasts all of the aggregated information to all nodes in the tree. 
Once these two stages are over, or in other words, once some frontier node $v$ has received the aggregated information from all cover clusters it belongs to, node $v$ knows its 2-hop neighborhood information as well as which node within that neighborhood is in the BFS tree. That information is sufficient for $v$ to know which of its incident edges belongs to the lexicographically-first minimal outgoing edge set $O_i$ of the current BFS tree (or cluster).

\begin{theorem}
    There exists a distributed randomized Monte Carlo algorithm that constructs a BFS tree (given a designated root node) in time $O(D \log n + \log^3 n)$ and using $O(n \log^3 n)$ messages in the \localKTOne{} model. 
\end{theorem}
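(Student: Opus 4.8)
The plan is to analyze BFSConstruction in three parts---correctness, round complexity, and message complexity---treating the preprocessing cover computation separately from the (up to) $D$ BFS-growing phases. First I would instantiate Lemma~\ref{lem:neighborhoodCover} with the concrete parameters $\kappa = \Theta(\log n)$ and $W = 2$. Since $n^{1/\kappa} = n^{1/\Theta(\log n)} = O(1)$, the cover preprocessing costs $O(\kappa^3 W^2) = O(\log^3 n)$ rounds and $O(n \cdot \kappa\, n^{1/\kappa} \log n \cdot \kappa W) = O(n\log^3 n)$ messages, which already matches the claimed bounds; the remaining task is to show the growing phases stay within the same budget.

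For correctness, I would argue by induction on the phase index $i$ that after phase $i$ the constructed tree contains exactly the vertices at BFS-distance at most $i$ from $r$, each attached by a shortest-path edge. The crux is that in phase $i$ every frontier node $v$ (at layer $i-1$) can locally compute its share of the lexicographically-first minimal outgoing edge set $O_i$. By the neighborhood property of the $(\kappa,2)$-cover, some cluster $C_v$ contains the entire $2$-neighborhood of $v$; because $v$ pings all clusters it belongs to and each pinged cluster runs the convergecast/broadcast of Proposition~\ref{prop:outsideCommunication}, $v$ learns the IDs, BFS-membership, and incident edges of all vertices within distance $2$. This is precisely enough for $v$ to decide, for each not-yet-covered neighbor $u$, whether the edge $(v,u)$ is the lexicographically-first edge among all edges joining $u$ to the current tree, since every competing frontier neighbor of $u$ lies within distance $2$ of $v$. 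Exactly one frontier neighbor of $u$ answers affirmatively, so $u$ receives exactly one exploration message and joins at layer $i$ with a unique parent.

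For the round complexity of the growing phases, each of the $\le D$ phases consists of the two described stages, taking $2\kappa + 6\kappa = O(\kappa) = O(\log n)$ rounds, for a total of $O(D\log n)$; adding the preprocessing gives $O(D\log n + \log^3 n)$. The hard part will be the message complexity of the phases, since a naive accounting charges a full $O(|C|)$-message convergecast/broadcast to every cluster $C$ in every phase in which it is touched. The key observation I would use to tame this is that each cover cluster has depth $O(W\kappa) = O(\kappa)$, hence diameter $O(\kappa)$ in $G$, so all of its vertices lie within $O(\kappa)$ consecutive BFS layers and the cluster can contain a frontier vertex in at most $O(\kappa)$ distinct phases. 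Therefore the total phase message cost is $\sum_{C\in\cC} O(\kappa)\cdot O(|C|) = O(\kappa)\sum_{C\in\cC}|C|$.

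Finally, by Lemma~\ref{lem:smallMaxDegreeOfCover} each vertex lies in $O(\kappa\, n^{1/\kappa}\log n)$ clusters, so $\sum_{C\in\cC}|C| = O(n\,\kappa\, n^{1/\kappa}\log n) = O(n\log^2 n)$, whence the phases send $O(\kappa)\cdot O(n\log^2 n) = O(n\log^3 n)$ messages (the per-phase pings are asymptotically absorbed, since with aggregation each contributes at most $O(|C|)$ per active cluster, dominated by the convergecast). Combining the preprocessing and phase contributions yields the claimed $O(n\log^3 n)$ message bound, completing the argument.
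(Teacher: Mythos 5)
Your proposal is correct and follows essentially the same route as the paper's proof: the same instantiation $\kappa=\Theta(\log n)$, $W=2$ of Lemma~\ref{lem:neighborhoodCover}, the same use of the neighborhood property to let each frontier node locally decide membership in the lexicographically-first minimal outgoing edge set, and the same message accounting based on the fact that a cover cluster of diameter $O(\log n)$ can contain frontier nodes in only $O(\log n)$ phases. Your bookkeeping sums over clusters ($\sum_{C}O(\kappa)\cdot O(|C|)$) where the paper sums over nodes, but this is the identical argument phrased dually.
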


\begin{proof}
    We start with the correctness. We claim that in each phase $i \geq 1$, the lexicographically-first minimal outgoing edge set $O_i$ is correctly (and distributedly) computed. The correctness of the BFS tree computation follows straightforwardly from that claim. Let us prove the claim now. Each frontier node (say $v$) obtains its 2-hop neighborhood information due to the neighborhood property of the ($\kappa$, 2)-cover: one of the cover $v$ is contained in also contains all nodes in the 2-hop neighborhood of $v$. Then, $v$ pings this cover cluster tree (and others) during phase $i$, and during the second stage, the information of $v$'s 2-hop neighborhood (including which nodes are in the BFS) is gathered and then sent to $v$. (Note that sufficient runtime is given to the two stages by Proposition \ref{prop:broadcastConvergecast}, and although $v$ receives different messages from the different cover cluster trees it belongs to, it can simply take the message containing the most information on $v$'s 2-hop neighborhood.) From this information, $v$ can compute which incident edge is in the lexicographically-first minimal outgoing edge set: for each neighbor $w$ of $v$, such that $w$ is not in the BFS, $v$ computes the lexicographically-first incident edge to $w$ with a BFS endpoint, and whether that edge is incident to $v$. 
    
    The round complexity is straightforward; computing the ($\kappa$, 2)-sparse neighborhood cover takes $O(\log^3 n)$ rounds by Lemma \ref{lem:neighborhoodCover}, and the $O(D)$ phases each take $O(\log n)$ rounds. We now bound the message complexity. First, computing the ($\kappa$, 2)-sparse neighborhood cover uses $O(n \log^3 n)$ messages by Lemma \ref{lem:neighborhoodCover}. As for the BFS tree computation, we separate messages into two types: the BFS exploration messages, and the messages sent over the cover cluster trees (for the pinging, broadcasts and convergecast). The former travel only along minimal outgoing edge sets, which guarantees each non-root node receives such a message once, joins the BFS and never receives such a message again. Thus over all phases, at most $n$ BFS exploration messages are sent. For the latter, we first claim that we can bound the number of convergecasts and broadcasts executed per cover cluster tree over the algorithm by $O(\log n)$. Indeed, each cluster cover tree has diameter $O(\log n)$ due to our choice of $\kappa$, and thus its vertices can be contained in at most $O(\log n)$ layers of the BFS tree. Moreover, messages are sent over a cover cluster tree in a given phase if and only if one of its nodes is a frontier node (due to the pinging behavior) in that phase. As such, messages are sent over a given cover cluster tree during at most $O(\log n)$ phases, and within each such phase at most $O(1)$ convergecasts and broadcasts are executed, thus proving the claim. (In the above counting, we use the fact that the pinging operation sends less messages than a convergecast.) Given the claim, Proposition~\ref{prop:broadcastConvergecast} and since each node belongs to $O(\log^2 n)$ different cover cluster trees by Lemma \ref{lem:smallMaxDegreeOfCover}, the message complexity due to the second type of messages is $O(n \log^3 n)$.
\end{proof}

The above (near) singularly-optimal BFS algorithm can be transformed into a randomized leader election algorithm via an additional $\Theta(\log n)$ factor in the message complexity as follows. 
Assuming nodes have some good estimate (i.e., within a constant factor) of the network size $n$, $\Theta(\log n)$ candidates are picked uniformly at random. Then, these candidates each start an independent execution of the above BFS algorithm, and messages of different executions are differentiated by simply adding to the messages the candidate's ID.

\begin{corollary}
    There exists a distributed randomized Monte Carlo algorithm that elects a leader in time $O(D \log n + \log^3 n)$ and using $O(n \log^4 n)$ messages in the \localKTOne{} model. 
\end{corollary}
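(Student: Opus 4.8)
The plan is to reduce leader election to parallel executions of the BFS construction algorithm from the preceding theorem, exploiting the fact that in \localKTOne{} we can pick a small candidate set and run the (already message-efficient) BFS construction from each candidate simultaneously. First I would assume, as stated in the paragraph preceding the corollary, that every node has a constant-factor estimate of $n$. Using this estimate, each node independently elects itself a \emph{candidate} with probability $\Theta(\log n / n)$. A standard Chernoff bound then shows that with high probability the number of candidates $k$ satisfies $k = \Theta(\log n)$: the expected number is $\Theta(\log n)$, and concentration gives both $k \geq 1$ (so at least one candidate exists) and $k = O(\log n)$ (so the overhead is controlled). I would state this concentration step explicitly since the entire message bound hinges on $k = O(\log n)$ with high probability.

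Next I would run the BFS construction algorithm from each of the $k$ candidates in parallel. The key observation is that the algorithm of the previous theorem can be shared across executions at the level of the preprocessing: the sparse $(\kappa,2)$-neighborhood cover depends only on $G$, not on the root, so it is computed \emph{once}, costing $O(n\log^3 n)$ messages and $O(\log^3 n)$ rounds by Lemma~\ref{lem:neighborhoodCover}. Only the layer-by-layer BFS growth phase is replicated per candidate. To keep the $k$ executions from interfering, each message (BFS exploration messages as well as the pinging/broadcast/convergecast traffic over the cover cluster trees) is tagged with the initiating candidate's \texttt{ID}; since we are in \localKTOne{} with unrestricted message size, a node can bundle the contributions for all (up to $O(\log n)$) executions passing through it into a single message per edge per round, or equivalently we simply account for the per-execution cost and sum. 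Either way, the round complexity does not increase beyond that of a single execution, so it remains $O(D\log n + \log^3 n)$, because all executions proceed in lockstep through the same $O(D)$ phases of $O(\log n)$ rounds each.

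For the message complexity I would argue as follows. The cover is built once: $O(n\log^3 n)$ messages. Each of the $k = O(\log n)$ BFS executions uses $O(n\log^3 n)$ messages by the previous theorem (the BFS growth portion alone is within this bound, since the cover-construction portion is excluded). Summing over the $O(\log n)$ candidates gives $O(n\log^4 n)$ messages for the BFS-growth traffic, which dominates the one-time cover cost, yielding the claimed $O(n\log^4 n)$ total. Once every candidate has its BFS tree rooted at itself, each candidate can learn (for instance, by a convergecast over its own tree, charged to the same bound) a distinguishing quantity, and the node deciding the tie-break is chosen deterministically --- e.g., the candidate with the highest \texttt{ID} is declared leader. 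Since the BFS trees span $V$ with high probability (each is a genuine BFS tree of the connected graph $G$), a final broadcast of the winning \texttt{ID} down any one spanning tree informs all nodes of the elected leader within the same time and message bounds.

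The main obstacle, and the step I would treat most carefully, is justifying that the $k$ parallel BFS executions genuinely compose without asymptotic overhead in \emph{either} measure. The round bound requires that the shared cover preprocessing is not re-run per candidate and that the $O(D)$ growth phases remain synchronized across executions; the message bound requires the clean separation that only the growth traffic (not the cover construction) is multiplied by $k$. I would also need the high-probability guarantee that $k \geq 1$ so that a leader is in fact elected, and that $k = O(\log n)$ so the $\log n$ factor (and not more) is the true overhead --- both following from the Chernoff concentration on the candidate count. Everything else follows mechanically from the previous theorem and Lemma~\ref{lem:smallMaxDegreeOfCover}.
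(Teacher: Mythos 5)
Your proposal is correct and follows essentially the same route as the paper: select $\Theta(\log n)$ candidates at random (using a constant-factor estimate of $n$), run the BFS construction from each candidate in parallel with messages tagged by the candidate's \texttt{ID}, and absorb the resulting $\Theta(\log n)$ factor into the message complexity only. The additional details you supply --- the Chernoff concentration on the candidate count, sharing the root-independent cover construction across executions, and the final tie-break/broadcast --- are sound refinements of the same argument rather than a different approach.
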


Note also that the above (near) singularly-optimal BFS algorithm can be made deterministic, but at the cost of higher logarithmic factors in both time and message complexities. We leave out the details however, because the deterministic solution given in the subsequent section is strictly more efficient in both time and message complexities. Instead, we give a brief sketch. The crucial point is that the randomized sparse neighborhood cover construction given in Section \ref{subsec:randomizedNeighborhoodCovers} is the only randomized component used in the above BFS construction. We can replace it with a message-efficient deterministic sparse neighborhood cover construction to obtain a message-efficient deterministic BFS construction algorithm. More concretely, one can combine the $(O(\log^3 n), W)$-sparse neighborhood cover construction of \cite{GhaffariT23} (that takes as parameter any positive integer $W$) with the simulation of \cite{censor2012global, Haeupler15} (as described in more detail in both Sections \ref{sec:intro} and \ref{sec:deterministicSolution}). Doing so gives a deterministic $(O(\log^3 n), W)$-sparse neighborhood cover construction algorithm that takes $O(W \log^{11} n)$ rounds and $O(n \log^{11} n)$ messages, and such that each cluster tree has depth $O(\log^3 n)$ and each edge appears in at most $O(\log^4 n)$ cluster trees. Moreover, if used in the above BFS construction algorithm, the resulting time and message complexities are respectively $O(D \log^3 n + \log^{11} n)$ and $O(n \log^{11} n)$.

\section{Deterministic Singularly-Optimal BFS via a Global Approach}
\label{sec:deterministicSolution}

We now give an alternative approach by leveraging results from gossip algorithms.
In contrast to the \localKTOne\ model, communication in the gossip model is much more restricted: 
In each round, a node may activate only a single link to some neighbor and the nodes may perform bidirectional exchange of messages (of arbitrary size) over this link. 
Note that it is possible that multiple nodes may activate a link to some node $u$ in the same round. 

Even though a gossip algorithm sends only $O(n)$ messages per round, it was nevertheless shown in \cite{censor2012global} that there is a $O(\log^3n)$-round randomized algorithm that allows all nodes to exchange information with all of their neighbors, which is known as 1-local broadcast.
That is, each node holds some piece of information (a ``rumor'') initially, and the goal of 1-local broadcast is for each node to learn all rumors of its neighbors in the network.
Subsequently, \cite{Haeupler15} gave a deterministic algorithm that achieves an improved upper bound of $O(\log^2 n)$ rounds for 1-local broadcast.
While there are results for solving broadcast efficiently when restricting ourselves to well-connected networks (see \cite{giakkoupis2011tight}), we point out that \cite{censor2012global} and \cite{Haeupler15} work in arbitrary networks and both crucially rely on the \KTOne\ assumption. 

In the algorithm of \cite{Haeupler15}, each node $v$ keeps track of the set $R_v$ of neighbors from which it has not yet received the rumor, and $v$ also keeps track which links it has activated so far (and in which order) in a set $E_v$.
The algorithm proceeds in iterations and, at the start of iteration $i$,  each node $v$ in parallel activates a link to some arbitrary node in $R_v$ (assuming $R_v\ne \emptyset$) and adds this link $(u,v)$ as its {$i$-th link} $l_i$ to $E_v$.
Then, over the next $i$ rounds, $v$ exchanges rumors by activating the links $l_i,\dots,l_1$, i.e., in reverse order in which they were added to $E_v$.
Afterwards, $v$ again exchanges all rumors by activating the links $l_1,\dots,l_i$ once more (this time in chronological order).
The above two batches of link activations are repeated once more in the same iteration to ensure that the information acquired by the nodes is symmetric in the sense that $u$ received $v$'s rumor if and only if $v$ also received $u$'s rumor, thus yielding a total round complexity of $4i$ rounds in iteration $i$. 
\cite{Haeupler15} shows that the algorithm terminates in $O(\log n)$ iterations, resulting in a time complexity of $O(\log^2 n)$ rounds.

We now show that the algorithm of \cite{Haeupler15} implies the existence of a \emph{multiplicative $O(\log n)$-spanner}~\cite{peleg1989graph}, which is a sparse subgraph $H$ of the network $G$ where, for each edge $\{u,v\} \in G\setminus H$ there exists a path of length $O(\log n)$ in $H$.
 
\begin{lemma} \label{lem:haeupler}
There is a deterministic gossip algorithm that terminates in $O(\log^2n)$ rounds and constructs a $O(\log n)$-spanner $H$ with $O(n\log n)$ edges, such that each node knows which of its incident edges are in $H$.
\end{lemma}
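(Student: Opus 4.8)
The plan is to take $H$ to be the subgraph consisting of every edge that is activated at least once during the execution of the $1$-local-broadcast algorithm of~\cite{Haeupler15} described above, and then to verify in turn the round bound, the bound $|H|=O(n\log n)$, the local-knowledge claim, and finally the $O(\log n)$ stretch. The first three properties are short; the stretch is where the real work lies.

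The round complexity is immediate from the cited analysis: the algorithm runs for $O(\log n)$ iterations, and iteration $i$ consists of four passes of $i$ rounds each, so the total is $\sum_i 4i = O(\log^2 n)$ rounds. For the edge count, I would observe that each node appends at most one new link $l_i$ to $E_v$ at the start of each iteration; hence over the $O(\log n)$ iterations every node contributes at most $O(\log n)$ edges to $H$, and summing over the $n$ nodes gives $|H|=O(n\log n)$. For local knowledge, note that an edge enters $H$ exactly when one of its endpoints activates it, and an activation is by definition a bidirectional exchange over that edge; thus both endpoints witness every activation of an incident edge and can record precisely which incident edges belong to $H$.

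The crux is the stretch property: for every $\{u,v\}\in G$ I must exhibit a path between $u$ and $v$ of length $O(\log n)$ inside $H$. I would start from correctness of $1$-local broadcast, which guarantees that at termination $u$ holds $v$'s rumor (and, by the symmetry that the repeated reverse/forward passes enforce, vice versa). Since a rumor can cross an edge only when that edge is activated, $v$'s rumor reaches $u$ along a walk all of whose edges lie in $H$, and extracting a simple path yields a path between $u$ and $v$ in $H$. The entire difficulty is to bound its length by $O(\log n)$, rather than by the trivial $O(\log^2 n)$ that follows merely from the total number of rounds.

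To control the length I would exploit the monotone structure of the passes. A reverse pass activates the links $l_i,\dots,l_1$ in decreasing order of index, so a rumor can advance only along a path whose successive edges have strictly decreasing indices; symmetrically a forward pass propagates a rumor only along strictly increasing-index paths. As the index of a link equals the iteration in which it was created, any such monotone segment uses at most one edge per iteration and therefore has length at most the iteration count, i.e.\ $O(\log n)$. The main obstacle, and the step I expect to demand the most care, is to prevent the rumor from acquiring a fresh long monotone segment in many distinct iterations and thereby accumulating a path of length $\Omega(\log^2 n)$ by concatenation. The way I would resolve this is to show that the ``already heard'' test driving the construction behaves like the distance test of a greedy spanner: a node declines to activate $\{u,v\}$ only once $u$ and $v$ are already joined in $H$ by a single monotone path, which then has length $O(\log n)$. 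Establishing this ``one monotone path'' characterization of having received a neighbor's rumor, and hence that the stretch is governed by the $O(\log n)$ iteration count instead of the $O(\log^2 n)$ round count, is the heart of the argument.
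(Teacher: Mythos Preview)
Your treatment of the round bound, the edge count, and the local-knowledge claim is correct and essentially identical to the paper's. The divergence is entirely in the stretch argument, and there you both take a harder road than necessary and leave the crucial step unproved.

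The paper does not try to trace the path of a rumor across the whole execution. Instead it invokes a structural property of Haeupler's algorithm directly: once the link sets $E_v$ are finalized, running \emph{only the schedule of the final iteration}---i.e., the sequence $l_T,\dots,l_1,l_1,\dots,l_T$ repeated twice---already suffices for every node to receive the rumor of every neighbor. Since the final iteration consists of $4T=O(\log n)$ rounds and each round activates only edges of $H$, any neighbor's rumor reaches $v$ along a walk in $H$ of length $O(\log n)$. That is the entire stretch proof; the $O(\log^2 n)$-versus-$O(\log n)$ tension you identify never arises, because one never needs to concatenate contributions from several iterations.

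Your alternative route---arguing that whenever $u$ has heard $v$'s rumor, the two are already joined by a \emph{single} monotone path in $H$---is not established in your sketch, and the greedy-spanner analogy you appeal to does not obviously apply: in Haeupler's algorithm a node does not ``decline'' to activate $\{u,v\}$ after a distance test; it simply picks an \emph{arbitrary} member of $R_v$ each iteration, so an edge can remain unactivated for reasons having nothing to do with short $H$-paths already existing between its endpoints. Turning your intuition into a proof would, in effect, require reproving Haeupler's key lemma that the last iteration's schedule subsumes all earlier ones. It is cleaner to cite that lemma directly, as the paper does.
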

\begin{proof}
We define $H$ as the union of the links added to $E_v$, for each node $v$.
Since the algorithm terminates in $O(\log n)$ iterations, it follows that the set of activated links $E_v$ of each node $v$ is of size $O(\log n)$, which implies the claimed upper bound on the number of edges in $H$.

It is shown in \cite{Haeupler15} that a node can exchange rumors with \emph{all} its neighbors by activating the links in $H$ according to the schedule in the last iteration of the algorithm.
As mentioned above, the number of rounds in iteration $i$ is $4i$, and hence the last iteration takes $O(\log n)$ rounds.
In other words, the edges of $H$ ensure the existence of a path of length $O(\log n)$ between $v$ and each one of its neighbors, which shows that $H$ is a spanner with stretch $O(\log n)$. 
\end{proof}

To construct a BFS tree on the network $G$, we first initiate the construction of a BFS tree $T_H$ on $H$ by executing the simple flooding-based algorithm starting at the designated root node $s$. 
According to Lemma~\ref{lem:haeupler}, this takes $O(D\log n)$ rounds and the number of messages sent is $O(|E(H)|) = O(n\log n)$.
Then, we convergecast the entire topological information of each node starting at the leaves in $T_H$ up to the root $s$, thus enabling $s$ to locally compute a BFS tree $T$ on $G$.
Finally, $s$ broadcasts the topological information of $T$ over the edges of $T_H$, which ensures that each node learns its parents and children in $T$.
Since the broadcast and convergecast on $T_H$ both take $O(D\log n)$ rounds and $O(n\log n)$ messages, the complexity of constructing a BFS tree is dominated by the cost of constructing the spanner $H$, which yields the following:

\begin{theorem}
\label{thm:deterministicResult}
  There exists a deterministic algorithm that constructs a BFS tree from a designated source in time $O(D\log n + \log^2n)$ and using $O(n \log^2n)$ messages in the \KTOne{} LOCAL model.
\end{theorem}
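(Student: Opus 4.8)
The plan is to reduce BFS construction on $G$ to BFS construction on the sparse spanner $H$ guaranteed by Lemma~\ref{lem:haeupler}, and then to exploit the unrestricted message size of \localKTOne{} to collect the full topology at the root. First I would invoke Lemma~\ref{lem:haeupler} to build, deterministically, an $O(\log n)$-spanner $H$ of $G$ with $O(n\log n)$ edges, where every node learns which of its incident edges lie in $H$. The key structural consequence I would extract is that $H$ has small \emph{depth} from the source: since $H$ has stretch $O(\log n)$, for every node $v$ we have $\mathrm{dist}_H(s,v) = O(\log n)\cdot\mathrm{dist}_G(s,v) = O(D\log n)$, so any BFS tree of $H$ rooted at $s$ has depth $O(D\log n)$.

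Next I would run ordinary flooding-based BFS on $H$ (restricted to $H$-edges, which each node can do because it knows its incident $H$-edges) to obtain a spanning tree $T_H$ rooted at $s$. By the depth bound this terminates in $O(D\log n)$ rounds, and since flooding sends $O(1)$ messages per edge of $H$ it uses $O(|E(H)|) = O(n\log n)$ messages. Crucially, $T_H$ need not be a BFS tree of $G$; I use it only as a low-depth communication backbone. Using $T_H$, I would then convergecast the entire local topology (each node's ID together with the IDs of its $G$-neighbors, available by the \KTOne{} assumption) up to $s$; in \localKTOne{} this is one message per node per level, so by Proposition~\ref{prop:broadcastConvergecast} it costs $O(D\log n)$ rounds and $O(n)$ messages. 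Once $s$ holds the full topology of $G$ it computes a genuine BFS tree $T$ of $G$ locally, and broadcasts each node's parent/children assignment in $T$ back down $T_H$, again in $O(D\log n)$ rounds and $O(n)$ messages.

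Finally I would total the costs. The round complexity is $O(\log^2 n)$ for the spanner plus $O(D\log n)$ for the flooding, convergecast, and broadcast, giving $O(D\log n + \log^2 n)$. For messages, the flooding and the two tree operations contribute only $O(n\log n)$; the dominant term comes from the spanner construction of Lemma~\ref{lem:haeupler}, which is a gossip algorithm running for $O(\log^2 n)$ rounds and hence sending $O(n\log^2 n)$ messages in total. This yields the claimed $O(n\log^2 n)$ message bound.

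I expect the only genuinely delicate point to be the reduction itself---specifically, arguing that it is legitimate to compute the final BFS tree of $G$ from $T_H$ even though $T_H$ respects $H$-distances rather than $G$-distances. The resolution is to keep the roles separate: $T_H$ serves purely as a sparse, low-depth routing structure, while correctness of the output rests on $s$ receiving the complete adjacency information of $G$ and solving BFS offline; the \localKTOne{} model's unrestricted bandwidth is what makes this separation affordable. Everything else reduces to the depth and sparsity bounds already furnished by Lemma~\ref{lem:haeupler}.
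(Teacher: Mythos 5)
Your proposal is correct and follows essentially the same route as the paper: build the $O(\log n)$-spanner $H$ of Lemma~\ref{lem:haeupler}, flood a tree $T_H$ on $H$ in $O(D\log n)$ rounds and $O(n\log n)$ messages, convergecast the full topology to $s$, solve BFS on $G$ locally at $s$, and broadcast the answer back over $T_H$, with the spanner's $O(n\log^2 n)$ gossip messages dominating. Your explicit remark that $T_H$ is only a low-depth communication backbone and need not itself be a BFS tree of $G$ is exactly the (implicit) point the paper relies on.
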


If, instead of constructing a BFS tree, our goal is to elect a unique leader among the nodes, we leverage the deterministic \KTZero{} CONGEST leader election algorithm of \cite{jacm15}, which, on general networks requires $O(D\log n)$ rounds and $O(m\log n)$ messages.
When executing this algorithm on the spanner $H$ instead, Lemma~\ref{lem:haeupler} yields the following result:
 
\begin{corollary} \label{cor:deterministicLeader}
Leader election can be solved deterministically in $O(D\log^2n + \log^2n)$ rounds and $O(n\log^2n)$ messages in the \KTOne{} LOCAL model.
\end{corollary}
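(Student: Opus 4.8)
The plan is to reduce leader election on $G$ to leader election on the sparse spanner $H$ provided by Lemma~\ref{lem:haeupler}, and then invoke the known deterministic \KTZero{} CONGEST leader election algorithm of~\cite{jacm15} as a black box. First I would construct $H$ using the deterministic gossip algorithm of Lemma~\ref{lem:haeupler}. Since a gossip algorithm sends $O(n)$ messages per round and the construction terminates in $O(\log^2 n)$ rounds, this preprocessing costs $O(\log^2 n)$ rounds and $O(n\log^2 n)$ messages, and—crucially—leaves each node knowing exactly which of its incident edges lie in $H$.

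The key structural observation is that the diameter of $H$ is $O(D\log n)$. This follows from the multiplicative stretch guarantee of Lemma~\ref{lem:haeupler}: for any pair of nodes $u,v$, take a shortest $u$--$v$ path in $G$ (of length at most $D$) and replace each of its edges either by itself (if already in $H$) or by the corresponding length-$O(\log n)$ detour in $H$; this yields a $u$--$v$ walk in $H$ of length $O(D\log n)$. Hence $\mathrm{diam}(H) = O(D\log n)$, while $|E(H)| = O(n\log n)$, again by Lemma~\ref{lem:haeupler}.

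Next I would run the deterministic \KTZero{} CONGEST leader election algorithm of~\cite{jacm15}, treating $H$ as the communication graph: every node restricts its communication to its incident $H$-edges (which it knows), so that all messages travel only along $H$. This is legitimate because \localKTOne{} is at least as powerful as \KTZero{} CONGEST (message sizes are unrestricted, and we simply discard the extra neighbor-identifier knowledge that \KTOne{} provides). The algorithm of~\cite{jacm15} runs in $O(D'\log n)$ rounds and $O(m'\log n)$ messages on a graph of diameter $D'$ with $m'$ edges; substituting $D' = O(D\log n)$ and $m' = O(n\log n)$ yields $O(D\log^2 n)$ rounds and $O(n\log^2 n)$ messages. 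Summing the preprocessing and leader-election costs gives the claimed $O(D\log^2 n + \log^2 n)$ rounds and $O(n\log^2 n)$ messages.

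I expect the argument to be essentially immediate once Lemma~\ref{lem:haeupler} is in hand, so the proof should be short. The only point requiring care is the diameter bound for $H$, since Lemma~\ref{lem:haeupler} is stated in terms of the stretch of individual edges of $G\setminus H$ rather than directly as a bound on $\mathrm{diam}(H)$. Converting this per-edge stretch into a diameter bound—by concatenating the detours along a shortest path and accounting for the edges already contained in $H$ (which contribute stretch $1$)—is the single step where I would be slightly careful, though it introduces no more than the multiplicative $O(\log n)$ factor already accounted for.
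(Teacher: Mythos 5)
Your proposal is correct and follows exactly the paper's approach: build the spanner $H$ of Lemma~\ref{lem:haeupler}, then run the deterministic \KTZero{} CONGEST leader election algorithm of~\cite{jacm15} restricted to $H$, using $\mathrm{diam}(H)=O(D\log n)$ and $|E(H)|=O(n\log n)$ to get the stated bounds. Your extra care in converting the per-edge stretch guarantee into a diameter bound for $H$ is a detail the paper leaves implicit, but it is the right justification and introduces no new cost.
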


\section{Conclusion}

In this paper, we fully resolve the complexity of distributed computation in the \KTOne{} LOCAL model, showing
that singularly optimal algorithms can be achieved for global problems. 
A question that we leave open is whether  {\em any} problem that admits a $O(t)$-round algorithm in the LOCAL model, for any $t\leq D$, can be solved in  $\tilde{O}(t)$ rounds and $\tilde{O}(n)$ messages in \KTOne{} LOCAL? In particular, the question is whether we can design singularly optimal algorithms for any problem in \KTOne{} LOCAL. This work answers this question for $t = D$. 

Another interesting question is whether one can improve the complexities of our
\KTOne{} LOCAL algorithms. In particular, can we obtain an $O(D)$-round algorithm for a BFS tree (or leader election and broadcast)  that uses $\tilde{O}(n)$ messages?

Finally, a major open question is whether singular optimality is possible for global problems in
the \KTOne{} CONGEST model. 

\bibliographystyle{plainurl}
\bibliography{ref}

\end{document}